\newtheorem{thm}{Theorem}
\newtheorem{cor}{Corollary}
\newtheorem{lemma}{Lemma}
\begin{document}

\newcommand{\yellow}[1]{\colorbox{yellow}{#1}}

\title{Optimum Power and Rate Allocation for Coded V-BLAST}

\author{\IEEEauthorblockN{Victoria Kostina}
\IEEEauthorblockA{Department of Electrical Engineering\\ Princeton University\\
Princeton, NJ, 08544, USA\\ Email: vkostina@princeton.edu}
\and
\IEEEauthorblockN{Sergey Loyka}
\IEEEauthorblockA{School of Information Technology and Engineering\\
University of Ottawa\\ Ontario, Canada, K1N 6N5\\ Email: sergey.loyka@ieee.org}}


\thanks{V. Kostina is with the Department of Electrical Engineering, Princeton University, NJ, 08544, USA (e-mail:
vkostina@princeton.edu)}

\thanks{S. Loyka is with the School of Information Technology and Engineering, University of Ottawa, Ontario,
Canada, K1N 6N5, e-mail: sergey.loyka@ieee.org}

\maketitle

\begin{abstract}
An analytical framework for minimizing the outage probability of a coded spatial multiplexing system while keeping the rate close to the capacity is developed. Based on this framework, specific strategies of optimum power and rate allocation for the coded V-BLAST architecture are obtained and its performance is analyzed. A fractional waterfilling algorithm, which is shown to optimize both the capacity and the outage probability of the coded V-BLAST, is proposed. Compact, closed-form expressions for the optimum allocation of the average power are given. The uniform allocation of average power is shown to be near optimum at moderate to high SNR for the coded V-BLAST with the average rate allocation (when per-stream rates are set to match the per-stream capacity). The results reported also apply to multiuser detection and channel equalization relying on successive interference cancelation.
\end{abstract}

\begin{keywords}
Multi-antenna (MIMO) system, spatial multiplexing, coded V-BLAST, power/rate allocation, waterfilling, performance analysis
\end{keywords}

\section{Introduction}

\PARstart{T}{o} exploit the impressive spectral efficiencies of wireless communication systems with multiple antennas at both the transmitter
and receiver \cite{FosLimits}, the V-BLAST algorithm was proposed \cite{FosBLAST}. Its simple transmission and detection mechanisms as well as its ability to achieve a significant portion of the MIMO capacity have made the V-BLAST a popular solution for MIMO signal processing. In this paper, we consider zero-forcing (ZF) V-BLAST, which relies on the successive interference cancelation (SIC) to decode the spatially-multiplexed sub-streams at the receiver. Because of SIC, the algorithm suffers from the error propagation effect so that the overall error performance is dominated by that of the 1st stream (with low diversity order), which may not be satisfactory.

Several techniques have been reported to improve the error performance of the uncoded V-BLAST by employing a non-uniform power allocation among the transmitters \cite{Kost}-\cite{Wang}. References \cite{Kalbasi}-\cite{Wang} explore the transmit (Tx) power allocation that minimizes the instantaneous (i.e. for given channel realization) error rate of the uncoded V-BLAST, with or without the optimal ordering. For this optimization, a new feedback session and power reallocation are needed each time the channel changes. A less demanding approach is to use an average optimization, i.e. to find the optimum allocation of the average power based on the average error rate. Since this ignores the small-scale
fading, only occasional feedback sessions and power reallocations are required, when the average SNR changes, and only the average SNR needs to be fed back to the Tx end. This approach has been exploited in \cite{Kost} and \cite{Varanasi}. None of the references above consider coding and none, with the exception of \cite{Varanasi}, consider rate allocation.

Most practical communication systems use coding; uncoded systems are rare. This motivates an analysis and optimization of the coded V-BLAST. When powerful temporal codes are used for each sub-stream of a spatial multiplexing system, the error probability is very low if the system is not in outage, so that the overall error probability is dominated by the outage events \cite{Tse}. Following this philosophy, we assume that capacity-achieving temporal codes are used for each stream, so that the per-stream rate equals to the capacity of that stream and there are no errors when stream is not in outage, and all bits are in errors otherwise, and study optimum power and rate allocation for the coded V-BLAST\footnote{This approach can also be extended to realistic codes by using the SNR gap to capacity.}. This model of coded V-BLAST allows analytically-tractable optimization and performance analysis of the algorithm.

Our approach to rate allocation is completely different to that of \cite{Varanasi}. Prasad and Varanasi \cite{Varanasi} find such rate allocation among the transmitters that minimizes the average error rate for a fixed total rate $R$. However, it is not optimal in terms of the capacity: when the SNR and hence the channel capacity increase, the total transmission rate $R$ stays the same, i.e. well below the channel capacity at high SNR. While such allocation policy maximizes the diversity gain, it also results in only one transmitter be active at high SNR \cite{Varanasi}, which also shows its sub-optimality capacity-wise (recall that the capacity-optimum strategy at high SNR is to use all transmitters).

In our approach to instantaneous optimization, the total rate and per-stream rates are adjusted to match the capacity and, while the outage probability is used as the main performance criterion, the transmission rate also stays close to the capacity at any SNR. We demonstrate that the conventional waterfilling (WF) algorithm does not maximize the instantaneous system capacity of the coded V-BLAST (due to successive interference cancelation) and propose a new algorithm termed "fractional waterfilling" (FWF) that maximizes the capacity by waterfilling on a sub-set of all transmitters and using an optimum rate allocation among the streams. While the complexity of the proposed FWF is higher compared to the conventional WF, the incremental complexity is small when the number of transmitters is not large. The FWF is shown to converge to the conventional WF at high SNR, and the optimum power allocation converges to the uniform one, which is dramatically different from the average power/rate allocations for the uncoded V-BLAST where most of the power goes to a single transmitter \cite{Varanasi}\cite{Kost}. When the rate allocation is uniform, however, the optimum average power allocation is close to that in the uncoded V-BLAST, which is demonstrated by deriving a compact closed-form expression for the coded case.

We show that while the optimum power allocation for the coded V-BLAST with uniform rate allocation brings only a few dB SNR gain (similarly to the uncoded system \cite{Kost}), the optimum rate allocation brings additional diversity order (even when the power allocation is uniform) and, thus, is much more superior at high SNR. The optimum power allocation on top of the rate allocation brings only a fixed SNR gain.

We also establish the relationship between outage probabilities achieved with various optimization strategies for a broad class of systems and channels. In particular, the maximization of the instantaneous system capacity is shown to also minimize the outage probability and, thus, the two problems are equivalent. The importance of the latter result lies in the fact that while the minimization of the outage probability is very challenging and highly non-convex problem with multiple solutions (as we demonstrate), the maximization of the instantaneous capacity is convex and has a well-known solution (via waterfilling). We show that, while there is a number of strategies to minimize the outage probability, only the FWF simultaneously minimizes the outage probability and maximizes the system capacity of the coded V-BLAST.

Due to the similar system architectures and processing strategies, most of these results also apply to multiuser detection and inter-symbols equalization systems that use successive interference cancellation.

The paper is organized as follows. Section \ref{sec:system} introduces the basic system model, assumptions and optimization strategies. Section \ref{sec:framework} presents a comparison of various strategies to minimize the outage probability of a spatial multiplexing system (without any specific channel assumptions). Sections \ref{sec:OPA} - \ref{sec:OPRA} derive and analyse the optimum power, rate and joint power/rate allocations for the coded V-BLAST. Section \ref{sec:conclusion} concludes the paper.

\section{System Model and Optimization Strategies}
\label{sec:system}

We study zero-forcing (ZF) V-BLAST without optimal ordering but with capacity-achieving temporal codes for each stream so that the maximum possible rate equals to the capacity of that stream (this follows the philosophy in \cite{Tse}). There are no errors if the stream is not in outage and all errors when it is, so that there is no error propagation when all streams are not in outage.

The following standard baseband discrete-time MIMO system model is employed,
\begin{equation}
\label{eqSysModel}
{\rm {\bf r}}={\rm {\bf H \bf \Lambda \bf s}}+ \boldsymbol \xi=\sum\nolimits_{i=1}^m
{{\rm {\bf h}}_i \sqrt {\alpha _i } s_i + \boldsymbol \xi}
\end{equation}
where ${\rm {\bf s}}=[s_1 ,s_2 ,...s_m ]^T$ and ${\rm {\bf r}}=[r_1 ,r_2
,...r_m ]^T$ are the vectors representing the Tx and Rx symbols
respectively, ``$T$'' denotes transposition, ${\rm {\bf H}}=[{\rm {\bf h}}_1
,{\rm {\bf h}}_2 ,...{\rm {\bf h}}_m ]$ is the $n\times m$ matrix of the
complex channel gains between each Tx and each Rx antenna, where ${\rm {\bf
h}}_i $ denotes i-th column of ${\rm {\bf H}}$, $n$ and $m$ are the numbers of Rx
and Tx antennas respectively, $n\ge m$, $\boldsymbol \xi$ is the vector of
circularly-symmetric additive white Gaussian noise (AWGN), which is
independent and identically distributed (i.i.d.) in each receiver, $\boldsymbol \Lambda = diag\left( {\sqrt {\alpha _1 } ,\ldots ,\sqrt {\alpha _m } } \right)$,
where $\alpha _i $ is the power allocated to the $i$-th transmitter. For the
regular V-BLAST, the total power is distributed uniformly
among the transmitters, $\alpha _1 =\alpha _2 =...=\alpha _m =1$. The channel will be assumed to be either ergodic ("fast fading"), in which case the key performance measure is the ergodic system capacity, or non-ergodic ("slow fading"), in which case the key performance measures are the outage probability and outage capacity and also an instantaneous system capacity (for given channel realization) \cite{Tse}. Details of a mathematical model of the uncoded V-BLAST, on which our model of the coded V-BLAST is based, and its analysis can be found in \cite{Varanasi}\cite{Kost}\cite{Loyka2}.

The following optimization strategies are considered: optimum (per-stream) power allocation (OPA), optimum (per-stream) rate allocation (ORA), and optimum joint power/rate allocation (OPRA).
All of the above strategies can be instantaneous, i.e. a new per-stream allocation is found for each channel realization, or average, i.e. the allocation is found based on channel statistics and stays the same as long as the average SNR stays the same. This eliminates the effect of small-scale fading and tracks only large-scale variations, similarly to the uncoded system \cite{Varanasi}\cite{Kost}.

The system capacity\footnote{the system includes the channel and also its transmission/processing architecture, which forms an extended channel whose capacity is the system capacity.} (i.e. the sum of per-stream capacities) and outage probability are used as optimization criteria. For instantaneous optimization, the optimum power allocation $\boldsymbol{\alpha}$ is a function of both channel realization $\mathbf{H}$ and average SNR $\gamma_0$, $\boldsymbol{\alpha} = \boldsymbol{\alpha}(\mathbf{H}, \gamma_0)$. For average optimization, the optimum power allocation  $\boldsymbol{\overline{\alpha}}$ is a function of the average SNR $\gamma_0$ only, $\boldsymbol{\overline{\alpha}} = \boldsymbol{\overline{\alpha}}(\gamma_0)$ and stays constant as long as $\gamma_0$ is constant. In any case, the total power constraint applies, $\sum_{i = 1}^m\alpha_i = m$.

\section{Minimizing Outage Probability}
\label{sec:framework}
In this section, we consider a generic spatial multiplexing system (not only V-BLAST) operating in a fading channel of generic statistics (not only i.i.d. Rayleigh), which is quasi-static (non-ergodic or "slow fading"). The system outage probability, i.e. the probability that the system cannot support a target rate,  can be defined as follows \cite{Tse},
\begin{equation}
\label{eq2}
P_{out} = Pr \{C < R\}
\end{equation}
where $R$ is the target rate, and $C$ is the instantaneous (i.e. for given channel realization) system capacity. The following Lemma will be instrumental.
\begin{lemma}
\label{lemma1}
Consider two instantaneous optimization strategies $\boldsymbol{\alpha}^{1}$ and $\boldsymbol{\alpha}^{2}$ (either rate or power optimization can be used) such that $C^1=C(\boldsymbol{\alpha}^{1})\geq C(\boldsymbol{\alpha}^{2})=C^2$ $\forall \mathbf{H}$. Then the corresponding outage probabilities are related as
\begin{align}
P_{out}^1 = Pr \{C^1 < R\} \leq Pr \{C^2 < R\} = P_{out}^2
\end{align}
\end{lemma}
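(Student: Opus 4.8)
The plan is to exploit the fact that the two capacities $C^1$ and $C^2$ are functions of the \emph{same} random channel matrix $\mathbf{H}$, so the hypothesis $C^1 \geq C^2$ holds pointwise (realization by realization) rather than merely in distribution. This places both quantities on a common probability space and reduces the claim to an elementary event-inclusion argument, with no need for coupling or stochastic-dominance machinery.

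First I would fix an arbitrary channel realization $\mathbf{H}$ and trace the pointwise chain of inequalities. Suppose the strategy-1 system is in outage for this realization, i.e. $C^1 < R$. Since by hypothesis $C^2 \leq C^1$ for this same $\mathbf{H}$, I immediately obtain $C^2 \leq C^1 < R$, so the strategy-2 system is also in outage. As this holds for every $\mathbf{H}$, the outage events satisfy the set inclusion
\begin{equation}
\{C^1 < R\} \subseteq \{C^2 < R\}.
\end{equation}
I would then invoke monotonicity of the probability measure on this inclusion to conclude
\begin{equation}
P_{out}^1 = Pr\{C^1 < R\} \leq Pr\{C^2 < R\} = P_{out}^2,
\end{equation}
which is exactly the asserted relation.

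The argument involves no genuine computational obstacle; the only point requiring care is the interpretation of the hypothesis. The result would \emph{fail} if $C^1 \geq C^2$ were assumed only in distribution (with $C^1$ stochastically dominating $C^2$ on separate probability spaces) rather than pointwise for the common $\mathbf{H}$. The strength of the lemma lies precisely in this pointwise comparison, which promotes a purely deterministic per-realization inequality to an inequality between outage probabilities without imposing any assumption on the channel statistics — consistent with the generic-statistics setting emphasized in this section and making the lemma the key tool for linking instantaneous-capacity maximization to outage-probability minimization later in the paper.
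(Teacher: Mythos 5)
Your proof is correct and follows essentially the same route as the paper: both establish the outage-set inclusion $\{C^1 < R\} \subseteq \{C^2 < R\}$ from the pointwise hypothesis $C^1 \geq C^2$ and then apply monotonicity of the probability measure. Your remark that the pointwise (same-$\mathbf{H}$) comparison is the essential hypothesis is a useful clarification but does not change the argument.
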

\begin{proof}
Define the outage sets $\mathcal{O}^{i} =  \{ \mathbf{H}: C^{i} < R\},\ i=1,2.$
The outage probabilities can then be rewritten as
$P_{out}^i = Pr \{ \mathbf{H} \in \mathcal{O}^{i}\}.$
From $C^1 \geq C^2$, it follows that $\mathcal{O}^1 \subseteq \mathcal{O}^2$, and thus $P_{out}^1 \leq P_{out}^2$. Q.E.D.
\end{proof}

Let us consider the following power (and/or rate) allocation strategies:
\begin{eqnarray}
\label{eq6}
\overline{ \boldsymbol{\alpha } }_C &=& \underset{\boldsymbol{\alpha}\left(\gamma_0\right) }{\operatorname{arg\,max}}{\overline{C}(\boldsymbol{\alpha})} \\
\label{eq7}
\overline{ \boldsymbol{\alpha } }_{out} &=& \underset{\boldsymbol{\alpha}\left(\gamma_0\right)}{\operatorname{arg\,min}}{P_{out}(\boldsymbol{\alpha})}\\
\label{eq8}
\boldsymbol{\alpha}_C &=& \underset{\boldsymbol{\alpha}\left(\gamma_0, \mathbf{H}\right)}{\operatorname{arg\,max}}
{C(\boldsymbol{\alpha})} \\
\label{eq9}
\boldsymbol{\alpha}_{out} &=& \underset{\boldsymbol{\alpha}\left(\gamma_0, \mathbf{H}\right)}{\operatorname{arg\,min}}{{P}_{out}(\boldsymbol{\alpha})}
\end{eqnarray}
where $\overline{C}$ is the mean (ergodic) capacity, and $C$, $\overline{C}$ and $P_{out}$ are considered as functions or functionals (in the case of instantaneous optimization) of the power allocation $\boldsymbol{\alpha}$; \eqref{eq6} and \eqref{eq7} correspond to the average optimization of the capacity and outage probability, and \eqref{eq8} and \eqref{eq9} correspond to the instantaneous optimizations, all subject to the total power constraint $\sum_{i = 1}^m\alpha_i = m$.
\begin{thm}
\label{thm1}
The outage probabilities of the optimization strategies in \eqref{eq6}-\eqref{eq9} are related as follows,
\begin{align}
\label{eq10}
Pr \{ C( \overline{\boldsymbol{\alpha}}_C ) < R \} &\geq Pr \{ C( \overline{\boldsymbol{\alpha} }_{out}  ) < R \}\\
\label{eq11}
                                                    &\geq Pr \{ C( \boldsymbol{\alpha}_{out}) < R \}=P_{out}^{opt}\\
\label{eq12}
                                                    &= Pr \{ C( \boldsymbol{\alpha}_C) < R \}
\end{align}
i.e. the instantaneous optimizations of the capacity and outage probability achieve the same lowest outage probability $P_{out}^{opt}$, the average optimization of the outage probability gives an intermediate result, and the average optimization of the capacity is worst in terms of the outage probability.
\end{thm}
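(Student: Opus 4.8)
The plan is to prove the chain of inequalities \eqref{eq10}-\eqref{eq12} by repeatedly invoking Lemma \ref{lemma1}, which converts a pointwise (i.e.\ $\forall \mathbf{H}$) capacity dominance into an outage-probability inequality. The key observation is that the four strategies in \eqref{eq6}-\eqref{eq9} satisfy a natural ordering of their instantaneous capacities, and the instantaneous capacity maximizer $\boldsymbol{\alpha}_C$ dominates every other strategy pointwise. So I would organize the argument by establishing, for each consecutive pair in the chain, that the corresponding instantaneous capacities satisfy the hypothesis of Lemma \ref{lemma1}.

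First I would prove \eqref{eq12}, the equality $P_{out}^{opt} = Pr\{C(\boldsymbol{\alpha}_C) < R\}$. Since $\boldsymbol{\alpha}_C$ maximizes $C(\boldsymbol{\alpha})$ for every $\mathbf{H}$, we have $C(\boldsymbol{\alpha}_C) \geq C(\boldsymbol{\alpha})$ for all $\mathbf{H}$ and all feasible $\boldsymbol{\alpha}$; in particular $C(\boldsymbol{\alpha}_C) \geq C(\boldsymbol{\alpha}_{out})$ pointwise. By Lemma \ref{lemma1} this gives $Pr\{C(\boldsymbol{\alpha}_C) < R\} \leq Pr\{C(\boldsymbol{\alpha}_{out}) < R\}$. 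For the reverse inequality I would use the defining property of $\boldsymbol{\alpha}_{out}$ from \eqref{eq9}: it minimizes $P_{out}$ over all instantaneous allocations, and $\boldsymbol{\alpha}_C$ is one such feasible allocation, hence $Pr\{C(\boldsymbol{\alpha}_{out}) < R\} \leq Pr\{C(\boldsymbol{\alpha}_C) < R\}$. Combining the two directions yields equality, which establishes both \eqref{eq12} and the labeling $P_{out}^{opt}$ in \eqref{eq11}.

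Next I would handle \eqref{eq11}, namely $Pr\{C(\overline{\boldsymbol{\alpha}}_{out}) < R\} \geq Pr\{C(\boldsymbol{\alpha}_{out}) < R\}$. This is immediate from the optimality of $\boldsymbol{\alpha}_{out}$ as the instantaneous outage minimizer: $\overline{\boldsymbol{\alpha}}_{out}$ is a function of $\gamma_0$ only and hence is a special case of a feasible instantaneous allocation $\boldsymbol{\alpha}(\gamma_0,\mathbf{H})$ (one that happens not to depend on $\mathbf{H}$), so its outage probability cannot be smaller than the minimized one. Finally, for \eqref{eq10}, $Pr\{C(\overline{\boldsymbol{\alpha}}_C) < R\} \geq Pr\{C(\overline{\boldsymbol{\alpha}}_{out}) < R\}$, I would use the same reasoning restricted to the class of average (SNR-only) allocations: $\overline{\boldsymbol{\alpha}}_{out}$ minimizes $P_{out}$ over that class by \eqref{eq7}, while $\overline{\boldsymbol{\alpha}}_C$ is merely the capacity maximizer within the same class and thus a feasible but generally suboptimal point for the outage criterion.

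The step I expect to require the most care is the equality \eqref{eq12}, because it is the only place where both directions must be argued and where Lemma \ref{lemma1} and the definition of the outage minimizer play against each other; the inequalities \eqref{eq10} and \eqref{eq11} are softer, following purely from the fact that each "out" strategy is the minimizer over a class that contains its competitor as a feasible point. A subtle point worth stating explicitly is that $C$ in \eqref{eq10}-\eqref{eq12} always denotes the \emph{instantaneous} system capacity evaluated at the given allocation, even when that allocation was obtained by an average optimization; this is what makes all four outage probabilities comparable on the common event $\{C(\cdot) < R\}$ and what lets Lemma \ref{lemma1} apply uniformly.
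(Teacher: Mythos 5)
Your proposal is correct and follows essentially the same route as the paper's proof: the two inequalities come from the defining optimality of $\overline{\boldsymbol{\alpha}}_{out}$ and $\boldsymbol{\alpha}_{out}$ over their respective (nested) feasible classes, and the equality \eqref{eq12} is obtained by combining the definition of $\boldsymbol{\alpha}_{out}$ with Lemma \ref{lemma1} applied to the pointwise dominance $C(\boldsymbol{\alpha}_C) \geq C(\boldsymbol{\alpha}_{out})$. Your added remarks (that an average allocation is a feasible instantaneous one, and that $C$ is always the instantaneous capacity) merely make explicit what the paper leaves implicit.
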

\begin{proof}
 The inequality in \eqref{eq10} is by definition of $\overline{ \boldsymbol{\alpha} }_{out}$ (i.e. $\overline{ \boldsymbol{\alpha} }_{out}$ is the best average power allocation that minimizes $P_{out}$). The inequality in \eqref{eq11} is because the instantaneous optimization of $P_{out}$ cannot be worse than the average one. To prove the equality in \eqref{eq12} note that $P_{out}^{opt} \leq Pr \{ C( \boldsymbol{\alpha}_C) < R \}$ (by definition of $\boldsymbol{\alpha}_{out}$) and also that $C( \boldsymbol{\alpha}_C) \geq C( \boldsymbol{\alpha}_{out})$ (by definition of $\boldsymbol{\alpha}_C$). Using the last inequality and Lemma 1, $P_{out}^{opt} \geq Pr \{ C( \boldsymbol{\alpha}_C) < R \}$. Combining this with $P_{out}^{opt} \leq Pr \{ C( \boldsymbol{\alpha}_C) < R \}$ , \eqref{eq12} follows. It can be shown (by examples) that none of the inequalities in the Theorem can be strengthened to equalities.
\end{proof}

The high importance of \eqref{eq12} in Theorem 1 is due to the fact that while the problem in \eqref{eq9} is non-convex (as we show below, it has multiple solutions) and very difficult to deal with in general, either numerically or analytically, the problem in \eqref{eq8} is convex and has a well-known generic solution (via waterfilling) and, since the two are equivalent, this solution also applies to \eqref{eq9}

It follows from Theorem 1 that the outage capacities $C_{out}$, defined from $\Pr\{C<C_{out}\}=P_{out}$ (i.e. the maximum rate supported by the system with given outage probability $P_{out}$), of the optimization strategies above satisfy the inequalities
\begin{align}
\notag
C_{out}(\overline{\boldsymbol{\alpha}}_C )\leq C_{out}(\overline{\boldsymbol{\alpha}}_{out})
\leq C_{out}(\boldsymbol{\alpha}_C) = C_{out}(\boldsymbol{\alpha}_{out})
\end{align}

Let us now consider the problem in \eqref{eq9} in more detail.
\begin{thm}
\label{thm2}
Instantaneous optimization of the outage probability in \eqref{eq9} is a non-convex problem with an infinite number of solutions, one of which is the solution to the convex problem in \eqref{eq8}, i.e. via waterfilling.
\end{thm}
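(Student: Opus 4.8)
The plan is to split the claim into its three assertions --- that the waterfilling allocation solves \eqref{eq9}, that \eqref{eq9} has infinitely many minimizers, and that its objective is non-convex --- and to reduce each to a pointwise-in-$\mathbf{H}$ analysis. Since the instantaneous allocation $\boldsymbol{\alpha}(\gamma_0,\mathbf{H})$ in \eqref{eq9} may be selected independently for every channel realization, minimizing $P_{out}=Pr\{C(\boldsymbol{\alpha})<R\}$ over policies decouples into minimizing the outage indicator $\mathbf{1}\{C(\boldsymbol{\alpha})<R\}$ over the power simplex $\mathcal{S}=\{\boldsymbol{\alpha}:\sum_i\alpha_i=m,\ \alpha_i\ge 0\}$ separately for each fixed $\mathbf{H}$. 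I would first record the fact underlying the convexity of \eqref{eq8}, namely that $C(\boldsymbol{\alpha})$ is continuous and concave on $\mathcal{S}$; consequently the superlevel set $\mathcal{F}(\mathbf{H})=\{\boldsymbol{\alpha}\in\mathcal{S}:C(\boldsymbol{\alpha})\ge R\}$ is convex and the maximizer $\boldsymbol{\alpha}_C$ attains $C_{\max}=\max_{\boldsymbol{\alpha}\in\mathcal{S}}C(\boldsymbol{\alpha})$.

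That waterfilling is a minimizer of \eqref{eq9} is then essentially \eqref{eq12} of Theorem \ref{thm1}, which already states $Pr\{C(\boldsymbol{\alpha}_C)<R\}=P_{out}^{opt}$; pointwise, if $C_{\max}\ge R$ then $\boldsymbol{\alpha}_C\in\mathcal{F}(\mathbf{H})$ sends the indicator to its least value $0$, and if $C_{\max}<R$ the indicator is $1$ for all allocations and $\boldsymbol{\alpha}_C$ is again optimal. For the infinitude of solutions I would argue that on the event $\{\mathbf{H}:C_{\max}>R\}$, continuity of $C$ forces $C>R$ on a whole relative neighborhood of $\boldsymbol{\alpha}_C$, so $\mathcal{F}(\mathbf{H})$ contains a continuum of allocations, each attaining the pointwise optimum $0$. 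Any policy that, on this event, selects an arbitrary point of $\mathcal{F}(\mathbf{H})$ (and any point of $\mathcal{S}$ where outage is unavoidable) attains the same global minimum $P_{out}^{opt}$, yielding an infinite family of solutions to \eqref{eq9} --- in contrast to the unique waterfilling solution of \eqref{eq8}.

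For non-convexity I would exhibit a direct violation of Jensen's inequality for the objective. Fix any realization with $\mathcal{F}(\mathbf{H})$ nonempty and a proper subset of $\mathcal{S}$ (generic at moderate SNR: placing all power on one weak stream gives $C<R$, while waterfilling gives $C\ge R$), and pick $\boldsymbol{\alpha}_C\in\mathcal{F}$ (indicator $0$) and $\boldsymbol{\beta}\in\mathcal{S}$ with $C(\boldsymbol{\beta})<R$ (indicator $1$). Along $\boldsymbol{\alpha}(\lambda)=\lambda\boldsymbol{\alpha}_C+(1-\lambda)\boldsymbol{\beta}$, continuity gives $C(\boldsymbol{\alpha}(\lambda))<R$ for all small $\lambda>0$, so the indicator there equals $1$, which exceeds $\lambda\cdot 0+(1-\lambda)\cdot 1=1-\lambda$; hence $P_{out}(\boldsymbol{\alpha})$ is not convex. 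The same endpoints also make the non-uniqueness visible, since $\boldsymbol{\alpha}_C$ and any nearby point of $\mathcal{F}$ are both optimal.

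The step I expect to be the main obstacle is the non-convexity argument, precisely because the optimal set $\mathcal{F}(\mathbf{H})$ is itself convex --- the problem is in fact quasi-convex --- so the non-convexity must be located in the objective (an indicator of the non-convex outage set $\{C<R\}$) rather than in the solution set, and the segment must be chosen so that an interior point re-enters outage while an endpoint does not. A secondary care point is making the reduction from the functional optimization in \eqref{eq9} to the pointwise problems rigorous, and noting that the non-convexity and strict-infinitude claims hold on the generic event where $\mathcal{F}(\mathbf{H})$ is a nonempty proper subset of $\mathcal{S}$, rather than for every $\mathbf{H}$.
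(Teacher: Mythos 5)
Your proposal is correct, and its core --- the pointwise reduction to minimizing the outage indicator over the power simplex for each fixed $\mathbf{H}$, the observation that the capacity maximizer $\boldsymbol{\alpha}_C$ attains the pointwise optimum whether or not outage is avoidable, and the continuum of equally good allocations on the non-outage event --- is essentially the paper's own argument, which phrases the same idea in terms of shrinking the outage set $\mathcal{O}(\boldsymbol{\alpha})$ and anchors the infinitude at the event $\{C(1,\ldots,1)\ge R\}$ rather than at a neighborhood of $\boldsymbol{\alpha}_C$. Where you genuinely go beyond the paper is the non-convexity claim: the paper supports it only by pointing to the multiplicity of solutions, which, as you correctly note, does not by itself rule out convexity (a convex problem can have a convex continuum of minimizers); your Jensen violation along the segment from $\boldsymbol{\alpha}_C$ to an outage-inducing $\boldsymbol{\beta}$ is the missing rigorous step, and correctly locates the non-convexity in the indicator of the outage set rather than in the solution set. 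Two small caveats: first, your appeal to concavity of $C(\boldsymbol{\alpha})$ (hence convexity of $\mathcal{F}(\mathbf{H})$ and quasi-convexity of the problem) is valid for the generic system of Section III in which Theorem \ref{thm2} is stated, but fails for the coded V-BLAST itself, where the paper later shows \eqref{eq8} is non-convex because the projected gains $\vert h_{i\perp}\vert$ change when streams are switched off --- fortunately your infinitude and Jensen arguments need only continuity of $C$, so they survive; second, invoking \eqref{eq12} of Theorem \ref{thm1} to show that $\boldsymbol{\alpha}_C$ solves \eqref{eq9} is legitimate but slightly indirect, since the paper's proof of \eqref{eq12} itself rests on the same pointwise outage-set argument you then redo.
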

\begin{proof}
Let $C(\boldsymbol{\alpha})$ be the instantaneous capacity for given power allocation $\boldsymbol{\alpha}$ and $\mathcal{O}(\boldsymbol{\alpha})=\{\mathbf{H}: C(\boldsymbol{\alpha})< R\}$ be the corresponding outage set. To minimize $P_{out}=Pr\{\mathbf{H} \in \mathcal{O}(\boldsymbol{\alpha})\}$ is to minimize the set $\mathcal{O}(\boldsymbol{\alpha})$, which is obviously accomplished by maximizing $C(\boldsymbol{\alpha})$ for each $\mathbf{H}$, i.e. via \eqref{eq8}.
To demonstrate that this is not the only solution, we note that no optimization is necessary for all $\mathbf{H}$ such that $C(1,...,1)\geq R$ (i.e. if the unoptimized instantaneous capacity is not less than the target rate $R$) since such optimization, while increasing the capacity, will not shrink the outage set and, thus, will not reduce the outage probability. Thus, any power allocation can be used in such a case provided that the resulting capacity does not drop below $R$. The Corollary below gives some examples.
\end{proof}
\begin{cor}
\label{cor2}
Examples of several strategies to minimize $P_{out}$:
\begin{enumerate}
\item $\boldsymbol{\alpha} = \boldsymbol{\alpha}_{C}$ in \eqref{eq8} for any $\mathbf{H}$ optimizes both $C$ and $P_{out}$.
\item $\boldsymbol{\alpha} = \boldsymbol{\alpha}_{C}$ in \eqref{eq8} for $\mathbf{H} \in \mathcal{O}^{u}$, where $\mathcal{O}^{u} = \{\mathbf{H}: C(1,...,1)<R \}$ is the unoptimized outage set, and uniform otherwise optimizes $P_{out}$ but not necessarily $C$.
\item In an iterative numerical algorithm to find $\boldsymbol{\alpha}$, stop optimization as soon as $C(\boldsymbol{\alpha}) \geq R$.
\item $\boldsymbol{\alpha} = \boldsymbol{\alpha}_{C}$ for $\mathbf{H} \in \{\mathcal{O}^u - \mathcal{O}^{opt}\} $,
   where $\mathcal{O}^{opt} = \{\mathbf{H}: C(\boldsymbol{\alpha}_{C})<R \}$ is the optimized outage set (outage takes place even if the full optimum power allocation is performed), and uniform otherwise \footnote{Not particularly practical since one has to know beforehand the optimized outage set $\mathcal{O}^{opt}$.}.
\end{enumerate}
\end{cor}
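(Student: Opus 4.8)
The plan is to show that each of the four strategies produces exactly the same outage set, namely the minimal set $\mathcal{O}^{opt} = \{\mathbf{H}: C(\boldsymbol{\alpha}_C) < R\}$ achieved by the full waterfilling solution \eqref{eq8}, so that each attains the minimum outage probability $P_{out}^{opt} = Pr\{\mathbf{H} \in \mathcal{O}^{opt}\}$ guaranteed by Theorem 2. The single structural fact underlying every case is the nesting $\mathcal{O}^{opt} \subseteq \mathcal{O}^u$: since $\boldsymbol{\alpha}_C$ maximizes the instantaneous capacity and the uniform allocation $(1,\ldots,1)$ is feasible, we have $C(\boldsymbol{\alpha}_C) \geq C(1,\ldots,1)$ for every $\mathbf{H}$, whence $C(\boldsymbol{\alpha}_C) < R$ forces $C(1,\ldots,1) < R$. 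I would establish this inclusion first and use it throughout.

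For item 1 there is essentially nothing to prove beyond Theorem 2: applying \eqref{eq8} everywhere maximizes $C$ for each $\mathbf{H}$ by definition and therefore minimizes the outage set, giving both $C$ and $P_{out}$ optimal. For item 2 I would partition the channel space and inspect the resulting capacity on each piece. On $(\mathcal{O}^u)^c$ the uniform allocation already yields $C(1,\ldots,1) \geq R$, so these realizations are not in outage; on $\mathcal{O}^u$ the allocation $\boldsymbol{\alpha}_C$ is used, and such an $\mathbf{H}$ is in outage precisely when $C(\boldsymbol{\alpha}_C) < R$, i.e. when $\mathbf{H} \in \mathcal{O}^{opt}$. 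Using $\mathcal{O}^{opt} \subseteq \mathcal{O}^u$, the outage set collapses to $\mathcal{O}^u \cap \mathcal{O}^{opt} = \mathcal{O}^{opt}$, so $P_{out}$ is minimized; capacity is not optimized because uniform power is suboptimal on $(\mathcal{O}^u)^c$. Item 4 follows by the same partitioning bookkeeping: on $(\mathcal{O}^u)^c$ uniform power gives $C \geq R$; on $\mathcal{O}^u \setminus \mathcal{O}^{opt}$ the allocation $\boldsymbol{\alpha}_C$ gives $C(\boldsymbol{\alpha}_C) \geq R$; and on $\mathcal{O}^{opt}$, which lies inside $\mathcal{O}^u$, uniform power is used and yields $C(1,\ldots,1) \leq C(\boldsymbol{\alpha}_C) < R$, so these and only these realizations are in outage, again reproducing $\mathcal{O}^{opt}$.

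The one case needing an extra hypothesis, and hence the main obstacle, is item 3, since it refers to the behaviour of an iterative algorithm rather than a closed-form allocation. I would assume what is true of standard waterfilling-type iterations, namely that the instantaneous capacity produced by the algorithm increases monotonically toward its converged value $C(\boldsymbol{\alpha}_C)$. Under this assumption, for $\mathbf{H} \notin \mathcal{O}^{opt}$ the trajectory crosses the level $R$ at some iterate, because its limit $C(\boldsymbol{\alpha}_C) \geq R$, so the early-stopping rule halts with $C(\boldsymbol{\alpha}) \geq R$ and the realization is not in outage; for $\mathbf{H} \in \mathcal{O}^{opt}$ the capacity never reaches $R$ even in the limit, the stopping criterion is never met, and $\mathbf{H}$ stays in outage. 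Thus the outage set is once more $\mathcal{O}^{opt}$. I would flag the monotonicity assumption explicitly, as it is the only non-elementary ingredient; everything else is set algebra built on the single inclusion $\mathcal{O}^{opt} \subseteq \mathcal{O}^u$.
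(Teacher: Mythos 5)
Your argument is correct and follows essentially the same route the paper takes: the corollary is stated without a separate proof, being derived from the observation at the end of the proof of Theorem \ref{thm2} that any allocation keeping $C(\boldsymbol{\alpha})\geq R$ whenever possible leaves the outage set equal to $\mathcal{O}^{opt}$, which is exactly your partition bookkeeping built on the inclusion $\mathcal{O}^{opt}\subseteq\mathcal{O}^{u}$. Your explicit flagging of the monotone-convergence hypothesis needed for item 3 is a genuine (and correct) addition that the paper silently assumes.
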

Theorem \ref{thm2} implies that $P_{out}$ should not be relied on as the only performance/optimization criterion of a communication system. It should be used in conjunction with other performance measures, such as $C$. In general, a good optimization strategy should optimize both $P_{out}$ and $C$, as \#1 in Corollary \ref{cor2}.
\section{Optimum Allocation of Average Power/Rate}
\label{sec:OPA}
In this section, we consider the average optimum power allocation (OPA) among the streams with uniform rate allocation, and briefly discuss the effect of average rate allocation. The objective is to minimize the outage probability.

\subsection{Optimizing $P_{out}$ via the average OPA}
We assume the i.i.d. Rayleigh fading channel and fixed per-stream rate $R$, in which case the system outage takes place if one of the streams is in outage,
\begin{align}
\label{14}
P_{out} = 1 - \prod_{i = 1}^m ( 1 - Pr\{ C_i < R \} );
\end{align}
where $C_i = \ln(1+\alpha_i g_i\gamma_0)$ is the instantaneous capacity of $i$-th stream, $g_i = \vert \mathbf{h}_{i \perp } \vert^2$, $\mathbf{h}_{i \perp }$ is i-th column of the channel matrix projected onto the subspace perpendicular to the subspace of the yet-to-be-detected symbols, and the power allocation $\boldsymbol{\alpha}$ is the function of the average SNR $\gamma_0$ only. We also used the fact that in the i.i.d. Rayleigh fading channels different $g_i$ are independent of each other \cite{Varanasi}\cite{Loyka2}.
The outage probability of $i$-th stream is equal to the outage probability of $(n-m+i)$-th order MRC (follows from a slight modification of the results in \cite{Varanasi}\cite{Loyka2}):
\begin{align}
\label{15}
Pr\{ C_i < R \} &= Pr\{ \ln(1 + \alpha_i \gamma_i) < R \}
\\ \nonumber
& = F_{n-m+i} \left ( \frac{e^R - 1}{\alpha_i\gamma_0} \right )\\
& \approx \frac{(e^R - 1)^{n-m+i}}{(\alpha_i\gamma_0)^{n-m+i}(n-m+i)!}, \mbox{ } \frac{e^R - 1}{\alpha_i\gamma_0} \ll 1 \notag
\end{align}
where $F_k(x)=1-e^{-x}\sum_{l=0}^{l=k-1} x^l/l!$ is the outage probability of $k$-th order MRC.
In high SNR regime, first step dominates the outage probability,
\begin{align}
\label{16}
P_{out} \approx Pr \{ C_1 < R \}.
\end{align}
Similarly to the uncoded V-BLAST in \cite{Varanasi}\cite{Loyka2}, this behavior is due to the fact that the first stream in the coded system still has the lowest diversity order.

\begin{thm}
\label{thm3}
The optimum allocation of average power to minimize $P_{out}$ (i.e. the problem in \eqref{eq7}), at medium to high SNR, can be expressed as
\begin{align}
\label{17}
\overline{\alpha}_1 & \approx m-\sum_{i=2}^m { \overline{\alpha}_i } , \\
\overline{\alpha}_i & \approx \frac{ b_i } {\left( {4\gamma _0 } \right)^{\frac{i-1}{n-m+i+1}}}, \ i=2 \ldots m,\notag
\end{align}
where $b_i$ are numerical coefficients.
\end{thm}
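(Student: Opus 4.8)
The plan is to convert the outage minimization \eqref{eq7} into an explicit constrained optimization using the high-SNR form of the per-stream outage probabilities, and then to solve it by Lagrange multipliers. Writing $p_i = Pr\{C_i < R\}$, equation \eqref{14} gives $P_{out} = 1 - \prod_{i=1}^m(1-p_i)$, and at medium-to-high SNR each $p_i$ is small, so to leading order $P_{out} \approx \sum_{i=1}^m p_i$. Substituting the power-law approximation from \eqref{15}, namely $p_i \approx (e^R-1)^{k_i}/[k_i!\,(\alpha_i\gamma_0)^{k_i}]$ with $k_i = n-m+i$, reduces \eqref{eq7} to minimizing $\sum_{i=1}^m (e^R-1)^{k_i}/[k_i!\,(\alpha_i\gamma_0)^{k_i}]$ subject to the power constraint $\sum_{i=1}^m\alpha_i = m$.

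The key structural observation, and the step that needs the most care, is which terms to retain. The naive high-SNR collapse $P_{out}\approx p_1$ of \eqref{16} depends only on $\alpha_1$ and would push all of the power onto the first stream, leaving the allocation to streams $2,\dots,m$ undetermined; indeed the dropped terms $p_i$ blow up as $\alpha_i\to 0$. The correct objective therefore keeps the full sum: minimizing $P_{out}$ balances the benefit of enlarging $\alpha_1$ (which shrinks the dominant $p_1$) against the cost of starving the higher streams. I expect this balancing to be the crux of the argument, and it must be accompanied by an a posteriori consistency check that the resulting small powers $\alpha_i$, $i\ge 2$, still satisfy $(e^R-1)/(\alpha_i\gamma_0)\ll 1$ so that \eqref{15} remains valid.

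Next I would form the Lagrangian $L = \sum_i p_i + \lambda(\sum_i\alpha_i - m)$ and set $\partial L/\partial\alpha_i = 0$. Stationarity yields $\alpha_i^{\,k_i+1}\propto 1/(\lambda\gamma_0^{k_i})$, i.e. $\alpha_i = [\,(e^R-1)^{k_i}/((k_i-1)!\,\lambda\,\gamma_0^{k_i})\,]^{1/(k_i+1)}$. To pin down $\lambda$ I would use the constraint: since each $\alpha_i$ with $i\ge 2$ scales as a negative power of $\gamma_0$ and hence vanishes at high SNR, the constraint collapses to $\alpha_1\approx m$, which is exactly the first line of \eqref{17}. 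This fixes $\lambda$ in terms of $m$, $\gamma_0$ and the stream-$1$ parameters.

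Finally, back-substituting this $\lambda$ into the expression for $\alpha_i$, $i\ge 2$, the $\gamma_0$-dependence becomes $\gamma_0^{(k_1-k_i)/(k_i+1)} = \gamma_0^{-(i-1)/(n-m+i+1)}$, matching the exponent in \eqref{17}; the remaining $\gamma_0$-independent prefactors, which depend on $m$, $n$ and $R$ through the $k_i!$ and $(e^R-1)^{i-1}$ factors, collect into the numerical coefficients $b_i$, with a numerical factor written as $4$ in the denominator so as to display the parallel with the uncoded V-BLAST. Self-consistency then follows because $\sum_{i\ge 2}\alpha_i\to 0$ justifies $\alpha_1\approx m$, and $\alpha_i\gamma_0\sim\gamma_0^{(n-m+2)/(n-m+i+1)}\to\infty$ justifies the use of \eqref{15}.
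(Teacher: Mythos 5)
Your derivation is correct and reaches exactly the asymptotic form of \eqref{17}: the stationarity conditions give $\alpha_i^{k_i+1}=(e^R-1)^{k_i}/[(k_i-1)!\,\lambda\,\gamma_0^{k_i}]$ with $k_i=n-m+i$, the constraint collapses to $\alpha_1\approx m$ fixing $\lambda\propto\gamma_0^{-k_1}$, and back-substitution yields $\alpha_i\propto\gamma_0^{-(i-1)/(n-m+i+1)}$ with the correct exponent. The paper's own proof is a one-line deferral: it states that \eqref{17} is obtained by the Newton--Raphson method ``following the same approach as for the uncoded system in [Kost],'' i.e.\ it writes down essentially the same first-order (Lagrangian) conditions for minimizing $\sum_i p_i$ under $\sum_i\alpha_i=m$ but then extracts the asymptotics by iterating Newton--Raphson on that nonlinear system, mirroring the uncoded-BER analysis (which is also where the cosmetic factor $4$ in $(4\gamma_0)$ originates). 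Your route replaces the iterative solution by a direct asymptotic solution of the stationarity system, exploiting the high-SNR degeneracy $\alpha_1\to m$; this is more self-contained and makes the structure of the coefficients $b_i$ (their dependence on $n$, $m$ and $R$ through $(e^R-1)^{k_i}$ and the factorials) explicit, whereas the paper leaves them as unspecified numerical constants. You also correctly identify the one step the paper glosses over: one must keep the full sum $\sum_i p_i$ rather than the dominant term $p_1$ of \eqref{16}, since the truncated objective is degenerate in $\alpha_2,\dots,\alpha_m$ and the dropped terms diverge as $\alpha_i\to 0$; and your a posteriori checks ($\sum_{i\ge 2}\alpha_i\to 0$ and $\alpha_i\gamma_0\sim\gamma_0^{(n-m+2)/(n-m+i+1)}\to\infty$) close the argument. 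No gaps.
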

\begin{proof}
\eqref{17} can be obtained using the Newton-Raphson method following the same approach as for the uncoded system in \cite{Kost}.
\end{proof}
\begin{cor}
\label{cor2}
The optimum power allocation in \eqref{17} behaves at high SNR as follows,
\begin{align}
\label{19}
\alpha_1 \approx m \gg \alpha_2 ... \gg \alpha_m,
\end{align}
i.e. most of the power goes to 1-st stream, with vanishingly small portions to higher-order streams (similarly to the case of uncoded V-BLAST in \cite{Kost}\cite{Varanasi}).
\end{cor}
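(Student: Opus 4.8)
The plan is to read off the high-SNR asymptotics directly from the closed-form allocation in \eqref{17}, since the corollary is essentially an asymptotic statement about that formula. First I would set $k=n-m\ge 0$ (nonnegative by the standing assumption $n\ge m$) and write the exponent governing each higher-order stream as $e_i=(i-1)/(k+i+1)$. For every $i=2,\ldots,m$ the numerator satisfies $i-1\ge 1>0$ while the denominator obeys $k+i+1\ge 3>0$, so $e_i>0$. Consequently each term $\overline{\alpha}_i=b_i(4\gamma_0)^{-e_i}$ is a strictly negative power of $\gamma_0$ and hence $\overline{\alpha}_i\to 0$ as $\gamma_0\to\infty$.

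Substituting this into the power-constraint expression $\overline{\alpha}_1=m-\sum_{i=2}^m\overline{\alpha}_i$ then gives $\overline{\alpha}_1\to m$ in the limit, while $\overline{\alpha}_2,\ldots,\overline{\alpha}_m$ all vanish. In particular $\overline{\alpha}_1\to m>0$ whereas $\overline{\alpha}_2\to 0$, which already establishes the first relation $\alpha_1\gg\alpha_2$ of \eqref{19}.

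The only step requiring a short computation is the chain $\alpha_2\gg\cdots\gg\alpha_m$, which I would obtain by showing that the exponents $e_i$ are strictly increasing in $i$. Forming the ratio of consecutive powers,
\begin{align}
\frac{\overline{\alpha}_{i+1}}{\overline{\alpha}_i}=\frac{b_{i+1}}{b_i}\,(4\gamma_0)^{e_i-e_{i+1}},\notag
\end{align}
I need $e_i-e_{i+1}<0$ so that this ratio tends to $0$ as $\gamma_0\to\infty$. Cross-multiplying the two positive denominators, the sign of $e_{i+1}-e_i$ is the sign of $i(k+i+1)-(i-1)(k+i+2)=k+2>0$, so indeed $e_{i+1}>e_i$ for all $i$. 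Hence higher-order streams decay strictly faster, giving $\overline{\alpha}_{i+1}/\overline{\alpha}_i\to 0$ and thus the full ordering in \eqref{19}. I do not anticipate any genuine obstacle: once \eqref{17} is granted, the corollary reduces to taking $\gamma_0\to\infty$, and the strict monotonicity of $e_i$ (the one-line algebraic check above) is the only place where more than a trivial observation is needed.
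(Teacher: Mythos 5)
Your proposal is correct and matches the paper's (implicit) argument: the corollary is stated as a direct consequence of \eqref{17}, and your verification that the exponents $e_i=(i-1)/(n-m+i+1)$ are positive and strictly increasing in $i$ is exactly the computation needed to justify both $\overline{\alpha}_i\to 0$ and the ordering $\alpha_2\gg\cdots\gg\alpha_m$. The only unstated assumption is that the coefficients $b_i$ are positive constants independent of $\gamma_0$, which the paper's phrasing ("numerical coefficients") clearly intends.
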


The SNR gain of optimum power allocation $G$ is defined as the difference in the SNR
required to achieve the same error rate in the unoptimized system as in the optimized one \cite{Kost},

\begin{equation}
\label{def:Gain}
P_{out}\left( {\alpha _1^{opt} ,...,\alpha _m^{opt} } \right)=P_{out}\left( { G,...,G } \right)
\end{equation}
\begin{cor}
\label{cor:Gopa}
The SNR gain $G_{OPA}$ of the OPA (either instantaneous or average) is bounded as follows:
\begin{align*}
1 \leq G_{OPA} \leq m
\end{align*}
and monotonically increases with the average SNR.
\end{cor}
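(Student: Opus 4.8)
The plan is to reduce both bounds to a single sandwich estimate on the optimal outage probability and then invoke monotonicity of the reference curve. Write $P_{out}^{unif}(\gamma)$ for the outage probability of the unoptimized (uniform) system at SNR $\gamma$; since each per-stream term $F_{n-m+i}((e^R-1)/\gamma)$ is strictly decreasing in $\gamma$, the product form in \eqref{14} makes $P_{out}^{unif}$ strictly decreasing and hence invertible. Because allocating power $G$ to every stream is the same as running the uniform system at SNR $G\gamma_0$ (since $C_i=\ln(1+G g_i\gamma_0)=\ln(1+g_i(G\gamma_0))$), the defining relation \eqref{def:Gain} reads $P_{out}^{opt}(\gamma_0)=P_{out}^{unif}(G\gamma_0)$, where $P_{out}^{opt}(\gamma_0)$ is the outage of the OPA at SNR $\gamma_0$. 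It therefore suffices to prove $P_{out}^{unif}(m\gamma_0)\le P_{out}^{opt}(\gamma_0)\le P_{out}^{unif}(\gamma_0)$ and then apply the order-reversing map $(P_{out}^{unif})^{-1}$.

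For the lower bound $G\ge1$ I would note that the uniform allocation is feasible, so the optimum cannot do worse, giving $P_{out}^{opt}(\gamma_0)\le P_{out}^{unif}(\gamma_0)$; inserting this into the defining relation and using monotonicity yields $G\gamma_0\ge\gamma_0$. For the instantaneous OPA the same conclusion follows from Lemma \ref{lemma1}, since $C(\boldsymbol{\alpha}_C)\ge C(1,\dots,1)$ for every $\mathbf{H}$.

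The upper bound $G\le m$ rests on the single observation that, under $\sum_i\alpha_i=m$ with $\alpha_i\ge0$, no stream can receive more than $\alpha_i\le m$. For the average OPA, each factor $1-F_{n-m+i}((e^R-1)/(\alpha_i\gamma_0))$ in \eqref{14} is nondecreasing in $\alpha_i$, so raising every $\alpha_i$ to its maximal value $m$ can only lower the outage; hence $P_{out}^{opt}(\gamma_0)\ge P_{out}(m,\dots,m)=P_{out}^{unif}(m\gamma_0)$. For the instantaneous OPA the same per-stream cap gives $C(\boldsymbol{\alpha}_C)\le\sum_i\ln(1+m g_i\gamma_0)=C^{unif}(m\gamma_0)$ for every $\mathbf{H}$, so Lemma \ref{lemma1} again yields $P_{out}^{opt}(\gamma_0)\ge P_{out}^{unif}(m\gamma_0)$. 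Either way, inverting $P_{out}^{unif}$ gives $G\gamma_0\le m\gamma_0$. Note that these two bounds are exact and require no high-SNR approximation.

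For the monotone increase I would work in the medium-to-high SNR regime of Theorem \ref{thm3}, where the first stream dominates by \eqref{16}. Equating the leading terms of \eqref{15} on both sides of the defining relation collapses the gain to $G\approx\alpha_1^{opt}=m-\sum_{i=2}^m\overline{\alpha}_i$, and each $\overline{\alpha}_i$ in \eqref{17} carries a strictly positive exponent $(i-1)/(n-m+i+1)$ and so decreases in $\gamma_0$; therefore $\sum_{i\ge2}\overline{\alpha}_i$ decreases, $\alpha_1^{opt}$ increases toward $m$, and $G$ increases toward its upper bound. I expect this last step to be the main obstacle: the clean identity $G\approx\alpha_1^{opt}$ holds only once the first-stream factor dominates, so genuine monotonicity outside the high-SNR regime would require controlling the full product in \eqref{14} rather than its leading factor. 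Within the regime where Theorem \ref{thm3} applies, however, the explicit formula \eqref{17} makes the monotonicity transparent and simultaneously shows that the upper bound $G\le m$ is attained asymptotically.
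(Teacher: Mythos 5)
Your proposal is correct and follows essentially the same route as the paper's (one-line) proof: reduce the gain to $G\approx\alpha_1^{opt}$ via first-stream dominance of the outage probability and read the bounds and the monotone growth toward $m$ directly off the explicit allocation in Theorem \ref{thm3}, exactly as in the cited proof of Theorem 1 in \cite{Kost}. Your exact (non-asymptotic) sandwich arguments for $1\le G\le m$ via feasibility of the uniform allocation and the per-stream cap $\alpha_i\le m$ are a nice strengthening over the purely high-SNR reasoning, and your caveat that the monotonicity claim is only established within the regime where Theorem \ref{thm3} applies is an accurate reading of the paper's scope.
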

\begin{proof}
Follows from the definition of the SNR gain and Theorem \ref{thm3}, similar to the proof of Theorem 1 in \cite{Kost}.
\end{proof}

Diversity gain is defined as \cite{Tse}
\begin{align}
\label{21}
d = - \lim_{\gamma_0 \to \infty }\frac{\ln P_{out} }{\ln \gamma_0 };
\end{align}
\begin{cor}
The average OPA does not provide any additional diversity gain over the unoptimized system:
\begin{align*}
d_{OPA} = n-m+1 = d_u
\end{align*}
where $d_u$ is the diversity gain of the unoptimized system, which is the same as for the uncoded system \cite{Loyka2}\cite{Kost}.
\end{cor}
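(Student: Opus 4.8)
The plan is to evaluate the limit in the definition of diversity gain \eqref{21} directly, exploiting the fact that $d$ depends only on the polynomial order of $P_{out}$ in $\gamma_0$ and is therefore insensitive to any constant (SNR-independent) scaling of the power. First I would invoke \eqref{16}: at high SNR the system outage is dominated by the first stream, $P_{out}\approx Pr\{C_1<R\}$, so it suffices to track the first-stream outage. Substituting $i=1$ into the high-SNR expansion \eqref{15} gives $Pr\{C_1<R\}\approx (e^R-1)^{n-m+1}/[(\overline{\alpha}_1\gamma_0)^{n-m+1}(n-m+1)!]$, so the entire $\gamma_0$-dependence is carried by the factor $(\overline{\alpha}_1\gamma_0)^{-(n-m+1)}$.

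Next I would use the high-SNR behavior of the optimum allocation. By Theorem~\ref{thm3} and \eqref{19}, $\overline{\alpha}_1\to m$ as $\gamma_0\to\infty$, since each higher-stream share $\overline{\alpha}_i$ ($i\ge 2$) vanishes; thus $\overline{\alpha}_1$ tends to a constant independent of $\gamma_0$. Consequently $\ln Pr\{C_1<R\} = -(n-m+1)\ln\gamma_0 - (n-m+1)\ln\overline{\alpha}_1 + \mathrm{const}$, and dividing by $\ln\gamma_0$ and letting $\gamma_0\to\infty$ makes the constant term and the $\ln\overline{\alpha}_1\to\ln m$ term drop out, leaving $d_{OPA}=n-m+1$. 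The same computation with $\alpha_1=1$ yields $d_u=n-m+1$; the only difference between the two cases is the bounded multiplier $\overline{\alpha}_1$ (with $1\le\overline{\alpha}_1\le m$), which is precisely the SNR gain $G_{OPA}\le m$ of Corollary~\ref{cor:Gopa} and, being constant in $\gamma_0$, cannot alter the exponent.

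The step that needs care is the justification that the first stream still dominates $P_{out}$ under the OPA, i.e. that \eqref{16} remains valid after power has been shifted away from streams $2,\dots,m$. Since OPA reduces $\overline{\alpha}_i$ for $i\ge 2$, one must verify that the diversity order of each higher stream does not drop below $n-m+1$. Plugging $\overline{\alpha}_i$ from \eqref{17} into \eqref{15} and collecting powers of $\gamma_0$ shows that stream $i$ decays as $\gamma_0^{-(n-m+i)(n-m+2)/(n-m+i+1)}$, whose exponent exceeds $n-m+1$ for every $i\ge 2$ (the difference of the two exponents reduces to $i-1>0$ after clearing denominators). Hence the first stream indeed remains the bottleneck and \eqref{16} holds under the OPA. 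With that confirmed, the limit above gives $d_{OPA}=n-m+1=d_u$, as claimed.
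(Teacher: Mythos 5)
Your proof is correct, and it rests on the same mechanism as the paper's one-line proof, which simply cites the bounded SNR gain of Corollary~\ref{cor:Gopa}: since the optimal $\overline{\alpha}_1$ tends to a constant in $[1,m]$, it only shifts the $P_{out}$ curve horizontally and cannot change the exponent of $\gamma_0$, so $d_{OPA}=n-m+1=d_u$. Where you add value is the explicit check that \eqref{16} survives the reallocation --- i.e., that after power is drained from streams $2,\dots,m$ according to \eqref{17}, stream $i$ still decays as $\gamma_0^{-(n-m+i)(n-m+2)/(n-m+i+1)}$ with exponent exceeding $n-m+1$ (the difference clears to $i-1>0$), so the first stream remains the bottleneck; the paper takes this for granted, and your verification is correct and worth making explicit.
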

\begin{proof}
Follows immediately from Corollary 3.
\end{proof}
\subsection{Average Rate Allocation to minimize $P_{out}$}
Based on \eqref{15} and \eqref{16}, it is straightforward to show that the optimum allocation of the average rate does not increase the diversity order as long as all streams are active, which is required for the total rate to stay close to the system capacity. On the other hand, if the only goal is to minimize $P_{out}$ and the total rate is fixed, then the solution in \cite{Varanasi} applies (only slight modifications are required to account for the coding), but only one stream is then active at high SNR and the total rate is far below the capacity. Thus, we conclude that using only average power and/or rate allocation does not allow one to approach the capacity and to achieve high diversity simultaneously. We thus consider instantaneous optimization below.

\section{Optimum allocation of instantaneous rates for the coded V-BLAST}
\label{sec:ORA}
In this section, we study the optimum rate allocation (ORA) assuming the uniform power allocation, $\alpha_i = 1$.

Since capacity-achieving codes are used for each stream, the per-stream rate is set equal to the corresponding per-stream instantaneous capacity $C_i$ and the outage probability is given by
\begin{align}
\label{22}
P_{out}^{ORA} = Pr \left \{ \sum_i C_i < mR \right \},
\end{align}
i.e. the outage takes place only when the sum capacity is below the target rate $mR$ (compare to \eqref{14} where an outage takes place when any $C_i < R$). The diversity gain of such an allocation can be immediately characterized.
\begin{thm}
\label{thm4}
Diversity gain of the instantaneous ORA:
\begin{align}
\label{23}
d_{ORA} = m \left( n - \frac{m-1}{2} \right) > d_{OPA} = d_u.
\end{align}
i.e. there is an additional diversity compared to the average optimization.
\end{thm}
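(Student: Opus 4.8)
The plan is to determine the high-SNR polynomial order of $P_{out}^{ORA}$ in \eqref{22} directly, by sandwiching the outage set $\{\sum_i C_i < mR\}$ between two events of product form whose probabilities are available in closed form from the per-stream statistics in \eqref{15}. Recall that, under uniform power allocation, $C_i = \ln(1+g_i\gamma_0)$, that $g_i$ is distributed as in $(n-m+i)$-th order MRC, so its CDF obeys $F_{k_i}(x) \approx x^{k_i}/k_i!$ as $x\to 0$ with $k_i = n-m+i$, and that the $g_i$ are mutually independent. Since the diversity gain in \eqref{21} depends only on the exponent of $\gamma_0$, it suffices to pin this exponent down by matching upper and lower bounds on $P_{out}^{ORA}$, which is the right strategy because the sum of logarithms in \eqref{22} does not factor and cannot be integrated in closed form.

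For the lower bound on $P_{out}^{ORA}$, I would observe that if $g_i < (e^R-1)/\gamma_0$ for every $i$, then each $C_i < R$ and hence $\sum_i C_i < mR$; thus the event $\{g_i < (e^R-1)/\gamma_0,\ \forall i\}$ is a subset of the outage set. By independence its probability equals $\prod_{i=1}^m F_{k_i}\!\left((e^R-1)/\gamma_0\right)$, which, using the small-argument expansion in \eqref{15}, scales as $\gamma_0^{-\sum_i k_i}$. For the matching upper bound I would use that every $C_i\geq 0$, so $\sum_i C_i < mR$ forces each $C_i < mR$, i.e. $g_i < (e^{mR}-1)/\gamma_0$ for every $i$; hence the outage set is contained in $\{g_i < (e^{mR}-1)/\gamma_0,\ \forall i\}$, whose probability is again a product over independent streams and again scales as $\gamma_0^{-\sum_i k_i}$. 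The two bounds differ only in their constant prefactors (through $e^R-1$ versus $e^{mR}-1$), which is immaterial to the limit in \eqref{21}.

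It then follows that $P_{out}^{ORA}$ has exponent $\sum_i k_i$, so that $d_{ORA} = \sum_{i=1}^m (n-m+i) = m(n-m) + m(m+1)/2 = m\!\left(n-\tfrac{m-1}{2}\right)$, which is \eqref{23}. The strict inequality $d_{ORA} > d_u$ is then immediate: the sum $\sum_i k_i$ contains the single term $k_1 = n-m+1 = d_u$ together with the strictly positive terms $k_2,\ldots,k_m$, so $d_{ORA} > d_u$ for every $m\geq 2$. The main obstacle is precisely the non-factorability of \eqref{22}; the insight that makes the sandwich tight is that the nonnegativity of the $C_i$ converts a constraint on their sum, in both directions, into per-stream constraints of the same polynomial order, so the gap between the sub- and super-set enters only the constant and never the exponent that governs the diversity gain.
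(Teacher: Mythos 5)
Your proposal is correct and is essentially identical to the paper's own proof: your sandwich between $\{g_i < (e^R-1)/\gamma_0\ \forall i\}$ and $\{g_i < (e^{mR}-1)/\gamma_0\ \forall i\}$ is exactly the paper's bound $\Pr\{C_{\max}<R\}\le P_{out}^{ORA}\le \Pr\{C_{\max}<mR\}$ expressed in terms of the per-stream gains, followed by the same use of independence, the small-argument MRC expansion in \eqref{15}, and the limit in \eqref{21}. The only difference is notational, so there is nothing to add.
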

\begin{proof}
Denoting $C_{max} = \max_i{C_i}$, it is straightforward to bound $P_{out}^{ORA}$ as follows
\begin{align}
\label{25}
Pr \{C_{max} < R\} &\leq P_{out}^{ORA} \leq Pr \{C_{max} < mR\}.
\end{align}
Since per-stream SNRs $\gamma_i$ are independent, so are $C_i$, and the above inequality can be re-written as:
\begin{align}
\label{26}
\prod_{i} Pr\left\{C_i < R \right\} &\leq P_{out}^{ORA} \leq \prod_{i} Pr\left\{C_i < mR \right\}
\end{align}
Using \eqref{15}, \eqref{26} and taking the limit in \eqref{21}, one obtains
\begin{align}
\label{27}
 d_{ORA} = \sum_{i=1}^m(n-m+i) = m \left(n - \frac{m-1}{2} \right)
\end{align}
As diversity of the unoptimized system is $d_{u} = n-m+1$, the difference is $\Delta d_{ORA} =  d_{ORA} - d_{u} = \left(m-1 \right) \left( n- \frac{m}{2} + 1 \right ) > 0$, i.e. there is a significant advantage in using the instantaneous rate allocation based on the per-stream capacity (note that, since the per-stream rates are equal to the per-stream capacities, the sum rate equals to the system capacity and the system is optimal in this respect).
\end{proof}
\section{Joint Optimum Power/Rate Allocation (OPRA) }
\label{sec:OPRA}
In this section, we consider the joint optimum power and rate allocation.

\subsection{Optimizing $C$}
The conventional waterfilling algorithm cannot be applied to solve \eqref{eq8}, because, unlike receivers without SIC, ZF-VBLAST relies on successive interference cancelation and coefficients $\vert h_{i \perp} \vert$ change when any stream is turned off (no interference from the turned off stream needs to be nulled out, so the dimensionality of the sub-space to project out decreases). This results in \eqref{eq8} being non-convex problem in the case of V-BLAST. However, it can be split into $2^m-1$ convex sub-problems (one problem for each set of turned off streams), each of which can be solved by the application of the conventional WF algorithm, as described below.
\begin{thm}
\label{thm5}
The joint optimum allocation of instantaneous power/rate for the coded V-BLAST (i.e. \eqref{eq8}) is given by the Fractional Waterfilling Algorithm (FWF) below:
\begin{enumerate}
\item  Set  step $p = 1$, $C_{opt} = 0$, $p_{opt} = 1$, $k_{opt} = 1$ ($p-1$ is the number of turned off streams)
\item Set $k = 1$. The index $k$ determines what combination of $p-1$ streams is turned off, $k = 1,...,C_m^{p-1}$, where $C_m^{p-1}$ is the number of combinations of size $p-1$ out of $m$ original streams.
\item  Calculate the per-stream gains with interference from yet-to-be-detected symbols projected out, $\vert h_{i \perp}^{kp} \vert$ for $i =  1, \ldots, m - p + 1$ based on the channel matrix $ \mathbf H^{kp}$, containing k-th combination of $m-p+1$ columns of the original matrix $\mathbf H$.
\item Find the water level for $m-p+1$ active transmitters:
\begin{align}
\label{28}
\frac{1}{\lambda_{kp} } = \frac{1}{m-p+1}\left( \mathcal{P} + \frac{1}{\gamma_0} \sum_{i = 1}^{m-p+1}
\frac {1}{\vert h_{i \perp}^{kp} \vert ^2} \right),
\end{align}
 where $\lambda_{kp}$ is the Lagrange multiplier determining the "water level", $ \mathcal{P} = m$ is the total power.
\item Calculate the power allocation
\begin{align}
\label{29}
\alpha_i^{kp} = \left ( \frac{1} {\lambda_{kp} } - \frac {1} {\gamma_0 \vert h_{i \perp}^{kp} \vert^2 } \right )_+,\ i = 1... m-p+1,
\end{align}
where $(x)_+ = x$ if $x > 0$ and $0$ otherwise.
\item Calculate the per-stream and total capacities:
\begin{align*}
C^{kp}_i = \ln \left(1 + \alpha_i^{kp} \gamma_0 \vert h_{i \perp}^p \vert^2 \right),\
C^{kp} = \sum_{ \substack{i=1} }^{m-p+1} C^{kp}_i
\end{align*}
\item If $C^{kp} > C_{opt}$, set $C_{opt} = C^{kp}$, $p_{opt} = p$, $k_{opt} = k$.
\item Set $k = k + 1$ and go to Step 3 until $k = C_m^{p-1}$
\item Set $p = p + 1$ (this eliminates one more stream) and go to Step 2 until $p = m$.
\item $\alpha_i^{k_{opt}p_{opt}}$ and $C_i^{k_{opt}p_{opt}}$ are the optimum power and rate allocations.
\end{enumerate}
\end{thm}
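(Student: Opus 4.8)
The plan is to recast the capacity-maximization problem \eqref{eq8} as a finite family of convex sub-problems, one per choice of active stream set, solve each by conventional waterfilling, and argue that exhaustive enumeration returns the global maximizer. First I would write the objective explicitly: for a given channel realization the system capacity is $C(\boldsymbol{\alpha}) = \sum_{i} \ln(1 + \alpha_i \gamma_0 \vert h_{i\perp}\vert^2)$, to be maximized over $\boldsymbol{\alpha}\geq 0$ subject to $\sum_i \alpha_i = m$. The obstruction to applying waterfilling directly, already noted before the theorem, is that the effective gains $\vert h_{i\perp}\vert^2$ are not constants: under SIC the projection subspace, and hence each gain, depends on which streams actually carry power, since turning a stream off removes its interference and raises the gains of the streams detected after it. Thus $C(\boldsymbol{\alpha})$ has the standard waterfilling form only piecewise, and the feasible set naturally partitions according to the support of $\boldsymbol{\alpha}$.

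Next I would fix the active set. Let $S$ be a non-empty subset of $\{1,\dots,m\}$ of size $m-p+1$ (so $p-1$ streams are off), indexed as combination $k$. With $S$ fixed, every gain $\vert h_{i\perp}^{kp}\vert^2$ is a determined constant computed from the reduced matrix $\mathbf{H}^{kp}$, and the restricted problem of maximizing $\sum_{i\in S}\ln(1+\alpha_i\gamma_0\vert h_{i\perp}^{kp}\vert^2)$ subject to $\sum_{i\in S}\alpha_i = m$, $\alpha_i\geq 0$, is concave over a convex feasible set. Forming the Lagrangian and imposing the KKT/stationarity conditions yields the familiar waterfilling solution \eqref{29}, with the water level $1/\lambda_{kp}$ pinned by the power constraint to give \eqref{28}. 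These are precisely Steps 4--6 of the algorithm.

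Then I would assemble the global claim. Since $C(\boldsymbol{\alpha})$ depends on $\boldsymbol{\alpha}$ only through its support and the associated powers, any candidate allocation lies in one of the finitely many ($2^m-1$) sub-problems, and its capacity cannot exceed that sub-problem's waterfilling optimum. In particular, if $\boldsymbol{\alpha}^\star$ is a global maximizer with support $S^\star$, then the gains on $S^\star$ are exactly the constants used in sub-problem $S^\star$, the restricted waterfilling there returns an allocation at least as good as $\boldsymbol{\alpha}^\star$, and FWF enumerates $S^\star$ among its cases. Selecting the best capacity over all $p$ and $k$ (Steps 7--10) therefore recovers the global optimum, establishing the theorem.

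I expect the main obstacle to be the consistency between the \emph{declared} active set and the set that waterfilling actually leaves on: the $(\cdot)_+$ clipping in \eqref{29} may assign zero power to a stream that was declared active, so the gains employed would be those of a silent-but-still-nulled stream rather than a genuinely removed one. I would argue this never causes FWF to miss the optimum. The allocation it reports remains feasible (a silent stream merely wastes a nulling dimension), and the strictly better configuration obtained by actually removing that stream corresponds to a smaller subset that is itself enumerated; hence the offending sub-problem can only \emph{under}-report its capacity. Because the global maximizer has, by definition, a self-consistent support, it is captured exactly by the sub-problem indexed by that support, and the enumeration's maximum coincides with the true optimum.
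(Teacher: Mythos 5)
Your proposal is correct and follows essentially the same route as the paper, which justifies the FWF by the observation (stated in the paragraph immediately preceding the theorem) that the non-convex problem \eqref{eq8} splits into $2^m-1$ convex sub-problems, one per set of turned-off streams, each solved by conventional waterfilling and exhaustively enumerated. Your additional discussion of the consistency between the declared active set and the support actually left on by the $(\cdot)_+$ clipping addresses a point the paper leaves implicit, and your resolution of it is sound.
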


While the FWF is more complex than the conventional WF, its incremental complexity is low for small $m$ (e.g. $m=2$). The following corollary shows that the FWF is close to the conventional one at high SNR.
%
\begin{cor}
\label{corWF1}
The fractional waterfilling algorithm converges to the conventional\footnote{"Conventional" in a sense that $h_{i \perp}$ are calculated only once and the waterfilling is done only once over the full set of streams. If some streams are assigned zero power, this does not lead to  recalculation of $h_{i \perp}$.} one at high SNR, when both produce the uniform power allocation for full-rank channels,  $\alpha^{opt}_i \to 1$ when $\gamma_0 \to \infty $.\footnote{for rank-deficient channels, both algorithms allocate no power to zero-gain dimensions and the same power to all active streams.}
\end{cor}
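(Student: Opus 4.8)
The plan is to show that at high SNR the fractional waterfilling algorithm selects $p_{opt}=1$ (no streams turned off) and that the resulting waterfilling over the full set of streams produces the uniform allocation $\alpha_i^{opt}\to 1$. First I would argue that keeping all streams active is asymptotically optimal. From Theorem \ref{thm5}, the FWF compares the maximal capacity $C^{kp}$ over all choices of turned-off streams. The sub-problem with $p=1$ (all $m$ streams active) allows the algorithm to waterfill over all available dimensions; at high SNR the capacity of this full-rank configuration grows like $m\ln\gamma_0+O(1)$, whereas any configuration that turns off $p-1\ge 1$ streams can achieve at most $(m-p+1)\ln\gamma_0+O(1)$, i.e. it loses $(p-1)\ln\gamma_0$ in the leading term. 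Hence for $\gamma_0$ large enough the full-set configuration strictly dominates and the algorithm selects $p_{opt}=1$.

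Next I would examine the waterfilling step \eqref{28}--\eqref{29} for the surviving $p=1$ configuration. With $p=1$ the water level is
\begin{align*}
\frac{1}{\lambda} = \frac{1}{m}\left( m + \frac{1}{\gamma_0}\sum_{i=1}^{m} \frac{1}{\vert h_{i\perp}\vert^2} \right),
\end{align*}
so that $1/\lambda \to 1$ as $\gamma_0\to\infty$, since the second term inside the parentheses vanishes. Substituting into \eqref{29}, the per-stream power becomes
\begin{align*}
\alpha_i^{opt} = \left( \frac{1}{\lambda} - \frac{1}{\gamma_0\vert h_{i\perp}\vert^2} \right)_+ \to 1,
\end{align*}
because both the correction term $1/(\gamma_0\vert h_{i\perp}\vert^2)$ and the gap between $1/\lambda$ and $1$ tend to zero. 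For full-rank channels all $\vert h_{i\perp}\vert^2>0$ are bounded away from zero almost surely, so the clipping operator $(\cdot)_+$ is inactive for large $\gamma_0$ and every stream receives positive power converging to unity. This simultaneously establishes that the conventional waterfilling over the full set yields the same limit, proving the two algorithms coincide asymptotically.

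The main obstacle is making the first step—the selection $p_{opt}=1$—fully rigorous rather than merely a leading-order comparison. One must ensure that the constant ($O(1)$) terms, which depend on the projected gains $\vert h_{i\perp}^{kp}\vert$ and differ between configurations (recall that turning off a stream \emph{changes} these gains, as noted before Theorem \ref{thm5}), cannot overturn the dominant $(p-1)\ln\gamma_0$ penalty. Since these constants are almost surely finite for full-rank channels, the logarithmic gap eventually dominates, but a clean argument should quantify this, e.g. by noting that the difference $C^{k1}-C^{kp}$ diverges to $+\infty$ like $(p-1)\ln\gamma_0$ regardless of the realization. The rank-deficient case in the footnote is handled separately and easily: zero-gain dimensions receive no power under both algorithms, and the argument above applies verbatim to the active sub-streams.
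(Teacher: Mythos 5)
Your proposal is correct and follows essentially the same route as the paper's own proof: the paper likewise shows that the capacity difference between configurations with $m-p+1$ and $m-p$ active streams tends to $\ln(\gamma_0\cdot\mathrm{const})\to+\infty$, so the full-dimensional configuration ($p=1$) wins at high SNR, and then observes that the water level $1/\lambda_{kp}\to m/(m-p+1)$, which equals $1$ for $p=1$. The only cosmetic difference is that you compare $p=1$ directly against general $p$ while the paper chains adjacent comparisons $C^{p}>C^{p+1}$; the substance is identical.
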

\begin{proof}
From \eqref{29}, $\alpha_i^{kp} \to \frac{1} {\lambda_{kp} } = \frac{m}{m-p+1}$ for all $k$, $p$ when $\gamma_0 \to \infty $. Let us denote $C^p = \max_k C^{kp}$, and $\boldsymbol h_{1 \perp}^p,..,\boldsymbol h_{m-p+1 \perp}^p$ is the subset of columns of $\textbf{H}$ corresponding to $C^p$. Since $\boldsymbol \alpha^p$ that yields the greatest $C^p$ is the solution, let us compare $C^{p}$ and $C^{p+1}$:
\begin{align*}
C^{p} - C^{p+1} \to \ln \left( \gamma_0 \frac
{\lambda_{p+1}^{m-p+1} \prod_{i=1}^{m-p+1}\vert h_{i \perp}^p \vert^2 }
{\lambda_{p}^{m-p} \prod_{i=1}^{m-p}\vert h_{i \perp}^{p+1} \vert^2} \right)
 > 0
\end{align*}
so $C^1 > C^2 > \ldots > C^m$, and fully dimensional system ($C^1$) is optimal when $\gamma_0 \to \infty$.
\end{proof}

The following corollaries follow immediately from the description of the FWF in Theorem 5.
\begin{cor}
\label{corWF2}
If the solution given by the FWF does not contain any $\alpha_i = 0$, it is the same as the solution given by the conventional WF.
\end{cor}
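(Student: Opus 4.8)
The plan is to exploit the fact that the conventional waterfilling algorithm coincides, by construction, with a single branch of the FWF --- namely the branch $p = 1$, in which no stream is turned off, the projected gains $\vert h_{i\perp} \vert$ are computed once over the full channel matrix $\mathbf{H}$, and \eqref{28}--\eqref{29} perform one waterfilling over all $m$ streams. This is precisely the ``conventional'' procedure described in the footnote of Corollary \ref{corWF1}. It therefore suffices to show that the hypothesis (no $\alpha_i = 0$ in the FWF solution) forces $p_{opt} = 1$, i.e. that the FWF selects exactly this branch.

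First I would observe that, by the construction of the FWF in Theorem \ref{thm5}, whenever $p_{opt} > 1$ the optimal combination turns off $p_{opt} - 1 \geq 1$ of the original streams; these streams are excluded from the power vector $\alpha_i^{k_{opt}p_{opt}}$ and hence receive zero power in the full $m$-dimensional allocation. Thus $p_{opt} > 1$ necessarily produces at least one $\alpha_i = 0$. Contrapositively, if the FWF solution contains no zero entry, then $p_{opt} = 1$.

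Next, with $p_{opt} = 1$ established, the selected branch has $p - 1 = 0$ turned-off streams and a single combination $k = 1$. Steps 3--5 then reduce to computing $\vert h_{i\perp} \vert$ for all $m$ streams from the full matrix $\mathbf{H}$ and applying \eqref{28}--\eqref{29} once --- which is identical to the conventional WF. Since both algorithms evaluate the same formulas on the same projected gains, their power and rate allocations coincide, completing the argument.

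I do not expect a genuine obstacle here, as the result is essentially structural and follows directly from the definition of the FWF. The only point requiring care is the reading of the hypothesis: ``$\alpha_i = 0$'' must be interpreted over the full set of $m$ original streams (so that a turned-off stream counts as a zero entry), since otherwise the statement would fail in cases where $p_{opt} > 1$ yet every active stream carries positive power. Making this interpretation explicit before running the two short steps above is the one thing I would be careful to state.
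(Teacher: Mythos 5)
Your argument is correct and is precisely the ``immediate'' structural reasoning the paper intends: the paper offers no explicit proof, stating only that the corollary follows from the description of the FWF in Theorem~\ref{thm5}, and your two steps (no zeros forces $p_{opt}=1$, and the $p=1$, $k=1$ branch is by construction the conventional WF) spell that out. Your closing remark that turned-off streams must be counted as zero entries of the full $m$-dimensional allocation is a worthwhile clarification of the hypothesis, but it does not change the approach.
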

\begin{cor}
\label{corWF3}
If the solution given by the conventional WF contains any $\alpha_i = 0$, the solution given by the FWF also contains zeros, but not the opposite.
\end{cor}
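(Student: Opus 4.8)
The plan is to split the statement into its two halves and handle each by leveraging the structure of the FWF already established in Theorem \ref{thm5} and Corollary \ref{corWF2}. The forward implication (conventional WF has a zero $\Rightarrow$ FWF has a zero) will be reduced to Corollary \ref{corWF2} by contraposition, and the converse failure will be argued from the selection rule of the FWF together with an explicit channel instance.

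First I would record the key structural observation: the conventional WF allocation is precisely the candidate generated by the FWF at its very first iteration, $p=1$, $k=1$. There all $m$ transmitters are active, the projected gains $|h_{i\perp}^{11}|$ are computed once from the full matrix $\mathbf H^{11}=\mathbf H$, and \eqref{29} waterfills once over the whole set, with the $(\cdot)_+$ operation possibly zeroing some streams. Hence the conventional WF allocation is always one of the candidates ranked in Step 7 of the FWF, so in particular any zeros of conventional WF are zeros of this $p=1$ candidate.

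Next, for the forward direction, I would argue by contradiction. Suppose the conventional WF allocation contains some $\alpha_i=0$ but, contrary to the claim, the FWF allocation contains no zeros. A zero-free FWF output cannot come from any candidate with $p>1$, since such a candidate carries at least $p-1\ge 1$ forced zeros for its turned-off streams; hence the winning candidate must have $p_{opt}=1$ with strictly positive waterfilling, which is exactly a zero-free conventional WF solution. This is precisely the content of Corollary \ref{corWF2}: a zero-free FWF solution must coincide with the conventional WF solution. But the conventional WF solution was assumed to contain a zero, so the FWF solution would inherit that zero, contradicting zero-freeness. Therefore the FWF allocation must contain a zero.

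Finally, for the clause ``but not the opposite,'' I would exhibit a realization $\mathbf H$ for which the implication fails in reverse. The mechanism is the SIC-induced gain change already used in the proof of Corollary \ref{corWF1}: turning off a stream removes it as an interferer, enlarges the projection subspace, and raises the surviving gains $|h_{i\perp}^{kp}|$, so a $p_{opt}>1$ candidate (which necessarily has zeros) can strictly exceed the full-rank sum capacity. Conventional WF, which never recomputes $h_{i\perp}$, may keep all $m$ streams active with strictly positive power; in that case the FWF has a zero while conventional WF does not. The main obstacle is this last step: one must verify quantitatively that, at low-to-moderate SNR, the projected-gain increase of the survivors upon cancelling a weak stream can outweigh the capacity that stream contributed, making the zero-free conventional WF allocation strictly suboptimal. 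This is exactly the non-convexity that motivates the FWF, and producing one such $\mathbf H$ (for instance at $m=2$) is enough to certify that the converse implication does not hold.
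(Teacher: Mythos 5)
Your argument is sound and is considerably more explicit than the paper's, which offers no proof at all for this corollary (it is simply asserted to ``follow immediately'' from the description of the FWF in Theorem~\ref{thm5}). The forward half of your proof is complete: the key observation that the conventional WF allocation is exactly the $p=1$, $k=1$ candidate of the FWF, combined with the fact that every $p>1$ candidate carries at least one forced zero, yields the implication either directly (if $p_{opt}>1$ the winner has zeros; if $p_{opt}=1$ the winner \emph{is} the conventional WF allocation and inherits its zero) or via your contrapositive route through Corollary~\ref{corWF2}. Either phrasing works, and your route has the merit of making Corollary~\ref{corWF2} do the work it was stated for.

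The only incomplete step --- which you candidly flag --- is the witness for ``but not the opposite.'' You identify the correct mechanism (cancelling a stream enlarges the projection subspace and raises the survivors' gains, the same effect invoked in Corollary~\ref{corWF1}), but you stop short of producing the channel realization. It is worth noting that such a witness is easy to write down, so this is not a serious gap: take $m=n=2$ with nearly collinear columns so that $\vert h_{1\perp}\vert^2 = \vert h_2\vert^2 = 1$ while $\vert h_1\vert^2 = 100$, and $\gamma_0=1$. The conventional WF sees two equal gains and returns the zero-free uniform allocation $\alpha_1=\alpha_2=1$ with capacity $2\ln 2 \approx 1.39$, whereas the FWF candidate that turns off stream~2 achieves $\ln(1+2\cdot 100)\approx 5.3$ on stream~1 alone, so the FWF output contains $\alpha_2=0$. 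With that instance supplied, your proof is complete and, unlike the paper's, actually substantiates both clauses of the statement.
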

\subsection{Optimizing $\overline{C}$}
In this section,  the mean (ergodic) capacity is used as the objective of
optimum allocation of the average power and rate, which applies to ergodic channels.
\begin{thm}
\label{thmAvC}
Assuming capacity-achieving codes for each stream with the rate equal to
the mean per-stream capacity, the optimum allocation of average power to
maximize the total average capacity at large SNR is:
\begin{align}
\alpha _i = \frac {1 - \frac{A_i}{\gamma_0}  } { 1 - \frac{1}{m\gamma_0} \sum_i A_i}, \ \gamma_0 \gg 1,
\end{align}
where $A_i$ are given by
\begin{align}
\label{eqAi}
A_i \approx
 \begin{cases}
 \ln \left( {\gamma _0 } \right) - \mathcal{E} , \ \mathcal{E} \approx 0.577, \ i =1-n+m \\
 \frac{1}{n-m+i -1}, \ i >1-n+m
 \end{cases}
\end{align}
\end{thm}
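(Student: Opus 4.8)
The plan is to attack \eqref{eq6} via a Lagrange multiplier on the total power constraint, reduce the stationarity condition for each stream to a single expectation, and then evaluate that expectation asymptotically using the chi-square (Gamma) statistics of the per-stream gains. Since capacity-achieving codes are used per stream and the gains $g_i = |\mathbf{h}_{i \perp}|^2$ are independent, the objective is $\overline{C}(\boldsymbol{\alpha}) = \sum_{i=1}^m \mathrm{E}[\ln(1+\alpha_i g_i \gamma_0)]$ under $\sum_{i=1}^m \alpha_i = m$ (with $\mathrm{E}[\cdot]$ denoting expectation). As recorded in \eqref{15}, $g_i$ is the gain of an $(n-m+i)$-th order MRC, i.e. Gamma-distributed with shape $k_i = n-m+i$, unit scale, and density $g^{k_i-1}e^{-g}/(k_i-1)!$. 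Forming the Lagrangian $\overline{C}(\boldsymbol{\alpha}) - \mu\left(\sum_i \alpha_i - m\right)$, the stationarity condition reads $\partial \overline{C}/\partial \alpha_i = \mu$ for every active stream; at high SNR all streams remain active for a full-rank channel (consistent with Corollary \ref{corWF1}), so this holds for all $i$.

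Next I would evaluate the derivative. Using $\frac{\alpha_i g_i \gamma_0}{1+\alpha_i g_i \gamma_0} = 1 - \frac{1}{1+\alpha_i g_i \gamma_0}$ gives
\begin{align}
\frac{\partial \overline{C}}{\partial \alpha_i} = \mathrm{E}\left[\frac{g_i \gamma_0}{1+\alpha_i g_i \gamma_0}\right] = \frac{1 - B_i}{\alpha_i}, \quad B_i = \mathrm{E}\left[\frac{1}{1+\alpha_i g_i \gamma_0}\right]. \notag
\end{align}
Setting this equal to $\mu$ yields $\alpha_i = (1-B_i)/\mu$, and summing under $\sum_i \alpha_i = m$ fixes $\mu = 1 - \frac{1}{m}\sum_i B_i$. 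Writing $A_i = \gamma_0 B_i$ then reproduces the claimed form $\alpha_i = (1 - A_i/\gamma_0)/(1 - \frac{1}{m\gamma_0}\sum_i A_i)$ exactly, so that only the high-SNR evaluation of $A_i$ remains; there $\alpha_i \to 1$, whence $A_i \approx \gamma_0\, \mathrm{E}[1/(1+g_i\gamma_0)]$.

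The heart of the argument is this last evaluation, which I would carry out by a recursion in the shape parameter. Setting $I_k(\gamma_0) = \mathrm{E}[1/(1+g\gamma_0)]$ for $g$ Gamma($k$,1) and applying the same identity inside the integral (together with $g^{k-1}/(k-1)! = \frac{g}{k-1}\, g^{k-2}/(k-2)!$) gives $I_k = (1 - I_{k-1})/((k-1)\gamma_0)$ for $k \geq 2$, with base case $I_1 = \frac{1}{\gamma_0} e^{1/\gamma_0} E_1(1/\gamma_0)$, where $E_1$ is the exponential integral. Hence for $k_i \geq 2$, $A_i = \gamma_0 I_{k_i} = (1 - I_{k_i-1})/(k_i-1)$, and since every $I_j \to 0$ as $\gamma_0 \to \infty$, this gives $A_i \approx 1/(k_i-1) = 1/(n-m+i-1)$, the second branch of \eqref{eqAi}. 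For $k_i = 1$ (the stream $i = 1-n+m$), the expansion $E_1(x) = -\mathcal{E} - \ln x + O(x)$ yields $A_i = \gamma_0 I_1 \approx \ln \gamma_0 - \mathcal{E}$, the first branch.

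The main obstacle I anticipate is not any single integral but the self-consistency of the high-SNR truncation: each $A_i$ is defined through $B_i(\alpha_i)$, which depends on the very $\alpha_i$ being solved for. I would justify replacing $\alpha_i$ by its leading value $1$ by verifying that $B_i = O((\ln\gamma_0)/\gamma_0)$ is small, so that the error it induces in $\alpha_i$ is of higher order, and I would need the asymptotics of $E_1$ to exactly the order that produces Euler's constant $\mathcal{E}$ in the $k_i=1$ term. A secondary point is to confirm that no stream is driven to zero power at high SNR, so that the all-active, single-multiplier KKT system used above is indeed the correct one.
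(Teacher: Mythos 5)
Your proposal is correct and follows the same route the paper indicates---Lagrange multipliers applied to \eqref{eq6}---the paper simply omits all details "due to the page limit." Your derivation of $\alpha_i=(1-B_i)/\mu$ with $A_i=\gamma_0 B_i$, the Gamma-shape recursion $I_k=(1-I_{k-1})/((k-1)\gamma_0)$, and the exponential-integral expansion producing $\ln\gamma_0-\mathcal{E}$ for the $k_i=1$ stream correctly supply exactly the omitted details, including the self-consistency check that $\alpha_i\to 1$ justifies the truncation.
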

\begin{proof}
 by applying the method of Lagrange multipliers to resolve \eqref{eq6}, details are omitted due to the page limit.
\end{proof}
\begin{cor}
\label{corUniformGood}
For optimization of the mean capacity, $\alpha_i \approx 1$ at high SNR,
i.e. the uniform average power allocation is optimum at high SNR ($\gamma_0 \gg A_i$).
\end{cor}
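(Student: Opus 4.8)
The plan is to obtain the result as a direct consequence of Theorem \ref{thmAvC}, by examining the behaviour of the closed-form expression for $\alpha_i$ in the limit $\gamma_0 \to \infty$. The key observation is that every coefficient $A_i$ in \eqref{eqAi} is either a fixed constant (for the streams with MRC order $n-m+i > 1$, where $A_i = 1/(n-m+i-1)$) or grows only logarithmically in $\gamma_0$ (for the stream with MRC order equal to one, where $A_i = \ln(\gamma_0) - \mathcal{E}$). In either case $A_i$ is sub-linear in $\gamma_0$, and this is all that is needed.

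First I would note that, since $\ln(\gamma_0)/\gamma_0 \to 0$ as $\gamma_0 \to \infty$, each ratio $A_i/\gamma_0 \to 0$, and hence the numerator of the expression in Theorem \ref{thmAvC} satisfies $1 - A_i/\gamma_0 \to 1$. Next, because the sum $\sum_i A_i$ contains only a finite number ($m$) of terms, each sub-linear in $\gamma_0$, the denominator likewise satisfies $1 - \frac{1}{m\gamma_0}\sum_i A_i \to 1$. Dividing the two, one obtains $\alpha_i \to 1$ for every $i$, which is exactly the claimed uniform allocation. The condition $\gamma_0 \gg A_i$ quantifies the regime in which both corrections $A_i/\gamma_0$ and $\frac{1}{m\gamma_0}\sum_i A_i$ are negligible, making the approximation $\alpha_i \approx 1$ accurate.

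There is no genuine obstacle here: the corollary is a routine limit of the formula already established in Theorem \ref{thmAvC}. The only point requiring a moment of care is the logarithmic coefficient for the lowest-diversity stream; one must verify that its logarithmic growth is dominated by the linear factor $\gamma_0$ appearing in the denominator, so that the correction still vanishes. Since $\ln(\gamma_0)$ grows strictly more slowly than $\gamma_0$, this is immediate, and the uniform power allocation $\alpha_i \approx 1$ emerges as optimal for maximizing the mean capacity in the high-SNR regime.
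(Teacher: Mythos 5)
Your proposal is correct and follows exactly the route the paper intends: the paper's proof is simply ``Immediately follows from Theorem \ref{thmAvC},'' and you have filled in the routine details, correctly noting that each $A_i$ is at worst logarithmic in $\gamma_0$ so that both correction terms vanish and $\alpha_i \to 1$.
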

\begin{proof}
Immediately follows from Theorem \ref{thmAvC}.
\end{proof}
Numerical analysis shows that the uniform power allocation is close to the optimum over a wide range of the average SNR except for very low values. It also follows that the instantaneous optimization cannot improve much the mean capacity (see also \cite{Tse}), unless the SNR is very low. This is due to the fact that the notion of mean (ergodic) capacity implies coding over long intervals of time so that the bad channel realization are compensated for by coding, and there is no much room for improvement left. For non-ergodic (slow fading) channels, the notions of outage probability and outage capacity apply and, in this case, the instantaneous optimization results in significant improvement since it improves the performance on bad channel realizations which dominate the outage probability, even at high SNR (the coding over time does not help in this case as the channel is quasi-static).

\subsection{Minimizing $P_{out}$ via the FWF}
Following Theorem 1, the FWF algorithm not only maximizes the instantaneous capacity $C$, but it also minimizes the outage probability $P_{out}$. In this section, we consider this improvement in $P_{out}$.
\begin{thm}
\label{thmSNRgainOPRA}
The SNR gain $G_{OPRA}$ of the optimum allocation of instantaneous power and rate (via the FWF) compared to the optimum rate allocaion only, defined from
\begin{align*}
P_{out}^{OPRA} (\gamma_0) = P_{out}^{ORA} (G_{OPRA} \gamma_0 ),
\end{align*}
is bounded as follows:
\begin{align*}
1 \leq G_{OPRA} \leq m
\end{align*}
\end{thm}
\begin{proof}
similar to that of Corollary 3.
\end{proof}
The following corollary is a straightforward consequence of this theorem.
\begin{cor}
\label{corOPRA}
Diversity gain of the OPRA via the FWF is the same as that of the ORA (see Theorem 4), i.e. the optimum power allocation on top of the rate allocation does not give any additional diversity gain, but only at most $m$-fold SNR gain.
\end{cor}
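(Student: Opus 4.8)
The plan is to read the diversity gain of the OPRA directly off the SNR-gain relation established in Theorem \ref{thmSNRgainOPRA}, exploiting the elementary fact that a bounded multiplicative shift in SNR is a horizontal translation of the log--log outage curve and therefore cannot change its high-SNR slope. First I would insert the identity $P_{out}^{OPRA}(\gamma_0) = P_{out}^{ORA}(G_{OPRA}\gamma_0)$ into the definition of diversity gain in \eqref{21}, obtaining
\begin{align*}
d_{OPRA} = -\lim_{\gamma_0 \to \infty} \frac{\ln P_{out}^{OPRA}(\gamma_0)}{\ln \gamma_0} = -\lim_{\gamma_0 \to \infty} \frac{\ln P_{out}^{ORA}(G_{OPRA}\gamma_0)}{\ln \gamma_0}.
\end{align*}

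Next I would use the fact, established in the proof of Theorem \ref{thm4}, that $P_{out}^{ORA}(\gamma)$ decays as a power law $\gamma^{-d_{ORA}}$ at high SNR, writing $\ln P_{out}^{ORA}(\gamma) = -d_{ORA}\ln\gamma + o(\ln\gamma)$ and then expanding $\ln(G_{OPRA}\gamma_0) = \ln\gamma_0 + \ln G_{OPRA}$. Dividing through by $\ln\gamma_0$ leaves the leading term $-d_{ORA}$ together with a correction proportional to $\ln G_{OPRA}/\ln\gamma_0$.

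The crux is to show that this correction vanishes. Here I would invoke the bound $1 \leq G_{OPRA} \leq m$ from Theorem \ref{thmSNRgainOPRA}, which gives $0 \leq \ln G_{OPRA} \leq \ln m$ uniformly in $\gamma_0$ --- crucially, even if $G_{OPRA}$ itself drifts with SNR, its logarithm stays bounded by the fixed constant $\ln m$. Hence $\ln G_{OPRA}/\ln\gamma_0 \to 0$ as $\gamma_0 \to \infty$, the limit collapses to $d_{OPRA} = d_{ORA}$, and the remaining assertion of the corollary (the ``at most $m$-fold SNR gain'') is simply the restatement of the same bound as a horizontal offset that leaves the slope intact.

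The step I expect to be most delicate is the second one: justifying that $P_{out}^{ORA}$ is a genuine power law $c\,\gamma^{-d_{ORA}}$ with only a sub-polynomial remainder, so that the $o(\ln\gamma)$ term is truly negligible after division by $\ln\gamma_0$. I would make this explicit by returning to the product bound \eqref{26}--\eqref{27} in the proof of Theorem \ref{thm4}: each factor $Pr\{C_i < R\}$ behaves as a constant times $\gamma_0^{-(n-m+i)}$ by the high-SNR approximation in \eqref{15}, so the product inherits a clean power law and its logarithm is $-d_{ORA}\ln\gamma_0$ plus a bounded constant. Everything else is routine limit manipulation.
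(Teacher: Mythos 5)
Your proposal is correct and follows exactly the route the paper intends: the paper gives no explicit proof, stating only that the corollary is a ``straightforward consequence'' of Theorem \ref{thmSNRgainOPRA}, and your argument makes that consequence precise by showing that the uniformly bounded gain $1 \leq G_{OPRA} \leq m$ contributes only a bounded additive term $\ln G_{OPRA} \leq \ln m$ to $\ln(G_{OPRA}\gamma_0)$, which vanishes after division by $\ln\gamma_0$ in the diversity limit \eqref{21}. Your careful handling of the possible $\gamma_0$-dependence of $G_{OPRA}$ and your appeal to the power-law behavior from \eqref{26}--\eqref{27} are exactly the right details to supply.
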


Thus, the instantaneous rate allocation is the most efficient of all techniques in terms of incremental improvement as it is the only technique that brings significant diversity gain and keeps the rate close to the capacity.

\section{Conclusion}
\label{sec:conclusion}
 A comparative analysis of the optimum power, rate and joint power-rate allocations, either instantaneous or average, for the coded V-BLAST have been presented. The instantaneous rate allocation is the most effective technique as it provides a significant diversity gain over the unoptimized V-BLAST. The power allocation can at most give an $m$-fold SNR gain and it does not provide any additional diversity (provided that all streams stay active, as required for near-capacity transmission). The power allocation on top of the rate allocation provides approximately the same incremental gain as the power allocation on top of the unoptimized (uniform, fixed rate) V-BLAST. Compact, closed-form expressions for the optimum allocation of average power have been derived to minimize the outage probability and it was shown that it has roughly the same effect on the outage probability for the coded V-BLAST as it for the uncoded one. Since the conventional waterfilling does not maximize the capacity of V-BLAST, the Fractional Waterfilling Algorithm (FWF), which optimizes simultaneously the instantaneous capacity and the outage probability of the coded V-BLAST, has been proposed, and its performance has been evaluated in terms of the diversity and SNR gains.

\bibliographystyle{IEEEbib}
\bibliography{my-bibliography-file}

\end{document}